\newbox\tempa
\newbox\tempb
\newdimen\tempc
\def\mud#1{\hfil $\displaystyle{\mathstrut #1}$\hfil}
\def\rig#1{\hfil $\displaystyle{#1}$}
\def\irulehelp#1#2#3{\setbox\tempa=\hbox{$\displaystyle{\mathstrut #2}$}%
                        \setbox\tempb=\vbox{\halign{##\cr
        \mud{#1}\cr
        \noalign{\vskip\the\lineskip}%
        \noalign{\hrule height 0pt}%
        \rig{\vbox to 0pt{\vss\hbox to 0pt{${\; #3}$\hss}\vss}}\cr
        \noalign{\hrule}%
        \noalign{\vskip\the\lineskip}%

        \mud{\copy\tempa}\cr}}%
                      \tempc=\wd\tempb
                      \advance\tempc by \wd\tempa
                      \divide\tempc by 2 }
\def\irule#1#2#3{{\irulehelp{#1}{#2}{#3}%
                     \hbox to \wd\tempa{\hss \box\tempb \hss}}}
\newcommand{\lra}{\longrightarrow}
\newcommand{\fa}{\forall}
\newcommand{\dotfa}{\dot{\fa}}
\newcommand{\dotvee}{\dot{\vee}}
\newcommand{\dotneg}{\dot{\neg}}
\def\nulll{\mbox{\it Null\/}}
\begin{document}     
\title{How can we prove that a proof search method is not an instance of 
another?}
\author{Guillaume Burel\inst{1} and Gilles Dowek\inst{2}}
\date{}
\institute{
Nancy-Universit\'e and LORIA\\
{\tt guillaume.burel@ens-lyon.org,
  http://www.loria.fr/\~{}burel}
\and\'Ecole polytechnique and INRIA\\
LIX, \'Ecole polytechnique,
91128 Palaiseau Cedex, France. \\
{\tt gilles.dowek@polytechnique.edu,
  http://www.lix.polytechnique.fr/\~{}dowek}
}
\maketitle

\thispagestyle{empty}

\begin{abstract}
We introduce a method to prove that a proof search method is not an
instance of another. As an example of application, we show that
Polarized resolution modulo, a method that mixes clause selection
restrictions and literal selection restrictions, is not an instance
of Ordered resolution with selection.
\end{abstract}

\section{Introduction}

An important property of the resolution method
\cite{robinson65resolution} is its refutational completeness, that is
the possibility to derive the empty clause from any unsatisfiable set
of clauses.  However, the search space to derive this clause can be
unnecessarily big. For instance, to derive the empty clause from the
set formed with the clause $P$, the clause $\neg P, Q$ and the clause
$\neg Q$, we can first generate the clause $Q$ from $P$ and $\neg P,
Q$ and then the empty clause from this clause and $\neg
Q$. Alternatively, we can generate the clause $\neg P$ from $\neg P,
Q$ and $\neg Q$ and then the empty clause from this clause and $P$.  These
two derivations are redundant and eliminating such redundancies is a
key issue to design an efficient proof search method.  Several types
of restrictions may be applied to eliminate these redundancies, while
preserving refutational completeness.

First, we may impose some restriction on the choice of clauses,
allowing the resolution rule to be applied to some pairs of clauses
and forbidding it to be applied to others. This type of restriction is
used, for instance, in the {\em Hyper-resolution} method
\cite{robinson65automatic}, in the {\em Set-of-support} method
\cite{wos65sos}, and in the {\em Semantic resolution method}
\cite{slagle67automatic}. For instance, in the Set-of-support method,
we identify a consistent subset of the set of clauses to be refuted,
called {\em the theory},
for instance, the subset formed with the clauses $P$ and $\neg P, Q$.
Then, the resolution rule can be applied to a pair of
clauses if at most one clause is in this set,  but it is forbidden
if both are.

Then, we may impose some restriction on the choice of literals, 
allowing the resolution rule to be applied to some literals of the resolved 
clauses and forbidding it to be applied to others.
This type of restriction is used, for instance,  in {\em Ordered 
resolution with selection}. In this method, a selection function and an order
relation define a set of selected literals in each clause. Then,
the resolution rule may be applied to a pair of clauses if 
the resolved literal are selected in the clauses, but it is forbidden 
otherwise.

It is easy to remark that combining clause selection restrictions
and literal selection restrictions in the theory clauses may jeopardize
completeness, even when the theory is consistent and the selected literals
are defined with respect to an order, as in the Ordered resolution with 
selection. 
\begin{example}\label{PRM not complete}
Consider the clauses
  \begin{align*}
    \text{Theory} &\left\{
    \begin{array}{l}
     \underline P\vee Q \\
      \underline{\neg P}\vee Q 
    \end{array}\right.
\\
  \text{Other clauses}&~\{~\neg Q
  \end{align*}
where the selected literals are underlined. We cannot derive the empty
clause if we restrict the application of the resolution rule to
clauses such that at most one of them is the theory and
the resolved literal in a theory clause is selected. 
However, the theory is consistent, so the
Set-of-support restriction alone is complete; and literals are selected 
according to the order
$Q\prec P$, so the Ordered resolution with selection alone also is complete.
\end{example}

\section{Polarized resolution modulo}

To prove the completeness of the combination of the Set-of-support method and
the Ordered resolution with selection method, we therefore need a
stronger condition than the consistency of the theory and the use of
an order to define selected literals. As we shall see, this condition
is exactly the cut elimination property for the polarized sequent calculus
modulo some rewrite rules corresponding to the theory clauses.
 
Indeed, the recently introduced {\em Polarized resolution modulo} method
\cite{polar}, combines these two restrictions.  In this method, we first 
identify
a subset of the set of clauses to be refuted. This set is called {\em 
the theory} and its elements 
{\em one-way clauses}. Then, in each of 
these clause, we identify 
a {\em selected literal}, and we impose the following restrictions:
\begin{itemize}
\item the resolution rule may be applied to a pair of clauses if 
at most one clause is a one-way clause, but it is forbidden when
both are, 

\item the resolution rule can be applied to a pair of clauses containing 
a one-way clause, if the resolved literal in the one-way clause is
the selected one, but not otherwise.
\end{itemize}
A last feature of Polarized resolution modulo is that unification is 
replaced by equational unification, but we shall not use this here.

\begin{example}\label{aaa}
Consider an arbitrary set of clauses containing the clauses
$$\underline{P}, Q$$
$$\underline{P}, \neg Q$$
then taking all the clauses of this subset to be one-way clauses and 
selecting the underlined literals is a complete restriction of resolution. 
\end{example}

\begin{example}\label{bbb}
Consider an arbitrary set of clauses containing the clauses
$$\underline{\neg \varepsilon(x~\dotvee~y)}, \varepsilon(x), \varepsilon(y)$$
$$\underline{\varepsilon(x~\dotvee~y)}, \neg \varepsilon(x)$$
$$\underline{\varepsilon(x~\dotvee~y)}, \neg \varepsilon(y)$$
$$\underline{\neg \varepsilon(\dotneg~x)}, \neg \varepsilon(x)$$
$$\underline{\varepsilon(\dotneg~x)}, \varepsilon(x)$$
$$\underline{\neg \varepsilon(\dotfa_T~x)}, \varepsilon(x~y)$$
$$\underline{\varepsilon(\dotfa_T~x)}, \neg \varepsilon(x~H_T(x))$$
$$\underline{\neg \varepsilon(\nulll~(S~x))}$$
$$\underline{\varepsilon(\nulll~0)}$$
and clauses containing no occurrences of the symbols $H_T$.
Then, taking all these clauses of this subset to be one-way clauses and 
selecting the underlined literals is a complete restriction of resolution.

Replacing unification with equational unification makes this method complete 
for Simple Type Theory \cite{polar}.
\end{example}

\begin{figure}
\noindent\framebox{\parbox{\textwidth
}{
$$
\hspace*{-4cm}
\begin{array}{c}
\irule{}
      {A \vdash B}
      {\mbox{axiom if $A \lra_-^* P, B \lra_+^* P$ and $P$ atomic}}
\vspace{1.5mm}\\
\irule{\Gamma, B \vdash \Delta ~~~ \Gamma \vdash C, \Delta}
      {\Gamma \vdash \Delta}
      {\mbox{cut if $A \lra_-^* B, A \lra_+^* C$}}
\vspace{1.5mm}\\
\irule{\Gamma, B, C \vdash \Delta}
      {\Gamma, A \vdash \Delta}
      {\mbox{contr-left if $A \lra_-^* B, A \lra_-^* C$}}
\vspace{1.5mm}\\
\irule{\Gamma \vdash B,C,\Delta}
      {\Gamma \vdash A,\Delta}
      {\mbox{contr-right if $A \lra_+^* B, A \lra_+^* C$}}
\vspace{1.5mm}\\
\irule{\Gamma \vdash \Delta}
      {\Gamma, A \vdash \Delta}
      {\mbox{weak-left}}
\vspace{1.5mm}\\
\irule{\Gamma \vdash\Delta}
      {\Gamma \vdash A,\Delta}
      {\mbox{weak-right}}
\vspace{1.5mm}\\
\irule{}
      {\Gamma, A \vdash \Delta}
      {\mbox{$\bot$-left if $A \lra_-^* \bot$}}
\vspace{1.5mm}\\
\irule{\Gamma \vdash B, \Delta}
      {\Gamma, A \vdash  \Delta}
      {\mbox{$\neg$-left if $A \lra_-^* \neg B$}}
\vspace{1.5mm}\\
\irule{\Gamma, B \vdash \Delta}
      {\Gamma \vdash  A, \Delta}
      {\mbox{$\neg$-right if $A \lra_+^* \neg B$}}
\vspace{1.5mm}\\
\irule{\Gamma, B \vdash \Delta ~~~ \Gamma, C \vdash \Delta}
      {\Gamma, A \vdash  \Delta}
      {\mbox{$\vee$-left if $A \lra_-^* (B \vee C)$}}
\vspace{1.5mm}\\
\irule{\Gamma \vdash B, C, \Delta}
      {\Gamma \vdash A, \Delta}
      {\mbox{$\vee$-right if $A \lra_+^* (B \vee C)$}}
\vspace{1.5mm}\\
\irule{\Gamma, C \vdash \Delta}
      {\Gamma, A \vdash  \Delta}
      {\mbox{$\langle x, B, t \rangle$ $\fa$-left if $A \lra_-^* \fa x~B$, $(t/x)B \lra_-^* C$}}
\vspace{1.5mm}\\
\irule{\Gamma \vdash B, \Delta}
      {\Gamma \vdash  A, \Delta}
      {\mbox{$\langle x, B \rangle$ $\fa$-right 
   if $A \lra_+^* \fa x~B$, $x \not\in FV(\Gamma \Delta)$}}
\end{array}
$$
\caption{Polarized sequent calculus modulo\label{sequent}}
}}
\end{figure}

As we saw in Example~\ref{PRM not complete}, this method is not always
complete even if the theory is consistent and if the
selected literals are maximal for some order on atoms.
But, we have proved in \cite{polar} that the completeness of this method is 
equivalent to cut elimination for the polarized sequent
calculus modulo 
the rewrite system associated to the set of one-way
clauses, where the rules of the polarized sequent calculus modulo are 
given in Fig. \ref{sequent} and the relation between clauses with 
selected literals and polarized rewrite rules is defined as follows.

\begin{definition}
Let $T$ be a set of clauses, such that in each clauses, a
literal is selected. The rewrite system associated with $T$ is
defined by:

To each selected literal $L$ in a clause $L, C_1, ..., C_p$
corresponds a 
rewrite
rule
\begin{itemize}
\item if $L$ is a negative literal $\neg P$, the rule  $P\lra_- \forall 
x_1 ... \forall x_n (C_1 \vee ... \vee C_p)$
\item if $L$ is a positive literal $P$, the rule $P\lra_+
\neg\forall x_1 ... \forall x_n (C_1 \vee ... \vee C_p)$
\end{itemize}
where $x_1$, ..., $x_n$ are the variables free in $C$ but not in $P$. 
\end{definition}

\begin{theorem}\label{PRM complete if and only if admits cut}
Let $T$ be a set of one-way clauses and $\mathcal R$ be the rewrite
system associated with $T$.  Polarized resolution modulo with the set
of one-way clauses $T$ is complete if and only if the polarized
sequent calculus 
modulo $\mathcal R$ admits the cut rule.
\end{theorem}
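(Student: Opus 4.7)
The plan is to prove both directions by constructing translations between cut-free proofs in the polarized sequent calculus modulo $\mathcal R$ and derivations in Polarized resolution modulo with one-way clauses $T$.

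For the direction $(\Leftarrow)$, suppose the polarized sequent calculus modulo $\mathcal R$ admits cut. Let $S \supseteq T$ be an unsatisfiable set of clauses. Using soundness and completeness of the sequent calculus with cut for first-order logic modulo $\mathcal R$, the sequent whose premises are the non-theory clauses $S \setminus T$ and whose right-hand side is empty has a proof in the polarized sequent calculus modulo $\mathcal R$; by the cut-admissibility assumption, it has a cut-free proof. I would then transform this cut-free proof, by induction on its structure, into a Polarized resolution modulo derivation of the empty clause. Each application of a rewrite $P \lra_- \dots$ on the left of a sequent (resp.\ $P \lra_+ \dots$ on the right) is translated into a resolution step against the one-way clause that gave rise to the rule, resolving precisely on its selected literal; the absence of cuts ensures that a rewrite is never confronted with another rewrite on the same atom, which matches the restriction forbidding resolution between two one-way clauses.

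For the direction $(\Rightarrow)$, suppose Polarized resolution modulo with one-way clauses $T$ is complete. Given any sequent provable with cuts in the polarized sequent calculus modulo $\mathcal R$, soundness of the sequent calculus yields an unsatisfiable set of clauses extending $T$; the completeness hypothesis then provides a polarized resolution refutation, which I would translate step by step into a cut-free sequent proof, each resolution step becoming a chain of rewrite-rule applications using the appropriate polarity on the appropriate side of the sequent together with the purely propositional/quantifier rules of Fig.~\ref{sequent}. This yields a cut-free derivation of the original sequent, establishing cut admissibility.

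The main obstacle will be the faithful verification that, in both translations, the polarity of a rewrite rule (determined by the sign of the selected literal in the one-way clause from which it came) exactly matches the side of the sequent on which the rewrite fires, and that the ban on resolving two one-way clauses together corresponds precisely to the impossibility, in a cut-free proof, of confronting two distinct rewrite rules at the same atomic occurrence without going through an axiom, and thus through a cut. A related subtle point is the handling of variables free in $C$ but not in $P$: the universal quantifier introduced in the definition of $\mathcal R$ must be matched, during the translation, with applications of the $\fa$-left and $\fa$-right rules on exactly the right terms, which correspond on the resolution side to the unifier produced at each resolution step. Once these local correspondences are spelled out case by case on the sequent and resolution rules, both translations proceed by routine induction.
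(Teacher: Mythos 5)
The paper itself does not prove this theorem: its entire proof is the citation ``Corollary of Theorem~1 of \cite{polar}''. Your proposal therefore reconstructs, rather than deviates from, the intended argument, and the double-translation strategy you describe (cut-free sequent proofs modulo $\mathcal R$ into Polarized resolution modulo refutations, and back) is indeed the one used in the cited manuscript; the correspondence you identify between the polarity of a rewrite rule and the sign of the selected literal, and between the ban on resolving two one-way clauses and the absence of cuts, is the right organizing idea. That said, your closing claim that the translations then ``proceed by routine induction'' hides the three lemmas that carry essentially all the weight. First, the step ``using soundness and completeness of the sequent calculus with cut modulo $\mathcal R$'' presupposes a compatibility lemma: that a sequent is provable modulo $\mathcal R$ with cut if and only if it is provable in plain sequent calculus from the clauses of $T$ as hypotheses. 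This holds because each polarized rule $P \lra_- \fa x_1 \ldots \fa x_n\,(C_1 \vee \ldots \vee C_p)$ (resp. $P \lra_+ \neg\fa x_1 \ldots \fa x_n\,(C_1 \vee \ldots \vee C_p)$) expresses exactly the universal closure of the clause it comes from, but it must be proved, and it is the place where the asymmetry between $\lra_-$ and $\lra_+$ matters. Second, in your $(\Rightarrow)$ direction the passage from an arbitrary provable sequent to ``an unsatisfiable set of clauses extending $T$'' requires clausification, hence Skolemization, and the return trip from the resolution refutation to a cut-free proof of the \emph{original} sequent must eliminate the Skolem symbols; this de-Skolemization is a known delicate point and is not an induction on the refutation alone. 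Third, your remark about matching $\fa$-left/$\fa$-right instantiations with unifiers is really a lifting lemma from ground derivations to derivations with most general unifiers, which again needs its own induction. None of these is a wrong turn --- the skeleton is correct and matches the source the paper relies on --- but a referee would ask you to state and prove these three lemmas explicitly rather than absorb them into ``routine''.
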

\begin{proof}
Corollary of Theorem~1 of \cite{polar}.
\end{proof}

For instance, the polarized sequent calculus modulo the
rewrite system
$$P \lra_+ \neg Q$$
$$P \lra_+ \neg \neg Q$$ has the cut elimination property
(Proposition~7 of~\cite{dowek:theory} proves that this property holds
whenever the left hand sides of the positive and of the negative rules
are disjoint sets) hence the completeness 
in Example \ref{aaa}. 

In the same way, the polarized sequent
calculus modulo the rewrite system
$$\begin{array}{rcl@{~}@{~}@{~}rcl}
\varepsilon(x~\dotvee~y) &\lra_-& (\varepsilon(x) \vee \varepsilon(y))&
\varepsilon(x~\dotvee~y) &\lra_+& \neg \neg \varepsilon(x)\\
&&& \varepsilon(x~\dotvee~y) &\lra_+& \neg \neg \varepsilon(y)\\
\varepsilon(\dotneg~x) &\lra_-& \neg \varepsilon(x) &
\varepsilon(\dotneg~x) &\lra_+& \neg \varepsilon(x)\\
\varepsilon(\dotfa_T~x) &\lra_-& \fa y~\varepsilon(x~y) &
\varepsilon(\dotfa_T~x) &\lra_+& \neg \neg \varepsilon(x~H_T(x))\\
&&&\varepsilon(\nulll~0) &\lra_+& \neg \bot\\
\varepsilon(\nulll~(S~x)) &\lra_-& \bot\\
\end{array}$$
has the cut elimination property for all sequents containing no occurrences
of the symbols $H_T$, hence the completeness in Example~\ref{bbb}. 

\section{Comparing Polarized resolution modulo with Set-of-support resolution}

In contrast, the polarized sequent calculus modulo the rewrite system 
$$P \lra_- Q$$
$$P \lra_+ \neg Q$$ does not have the cut elimination property. Indeed,
the proposition $Q$ has a proof by cutting through $P$, but no cut
free proof. This explains the incompleteness in Example~\ref{PRM not
complete}.

The next example shows that Polarized resolution modulo can fail in finite 
time for some input when Set-of-support resolution loops.
\begin{example}
  Consider the one-way clause $$\underline{P(f(x))},\neg P(x)$$ and
  another clause $$P(a)$$ for some constant $a$. Both Polarized
  resolution modulo and Set-of-support resolution are
  complete. However, Polarized resolution modulo fails in finite time
  to derive the empty clause (the two clauses cannot be resolved due
  to the condition on the selected literal), whereas Set-of-support
  resolution loops, deriving the clauses
  $P(\underbrace{f(\ldots(f(}_{i\text{ times}}a)\ldots))$ for all $i>0$.
\end{example}

In this example, we see the importance of having a selection
function. However, having such a selection function is not enough, 
as we will see in Section~\ref{PRM vs ORS}.

\section{Comparing Polarized resolution modulo with Ordered resolution with 
selection}\label{PRM vs ORS}

{\em Ordered resolution with selection} \cite{BachmairGanzinger} is a
proof search method pa\-ram\-e\-trized by a computable selection
function $S$, that associates to each clause a set of negative
literals of this clause and a decidable order relation $\prec$ on
atoms that is stable by substitution and total on ground atoms.  Each
pair $\langle S, {\prec} \rangle$ defines a different proof-search method. Thus
Ordered resolution with selection is a family of proof search methods rather
than a single method.

Many known restrictions of resolution are instances of Ordered resolution 
with selection for an appropriate selection function and order.  
Thus, we may wonder if Polarized resolution modulo is, in the same way 
an instance of Ordered resolution with selection. We prove now that this is 
not the case.

We prove more generally, that if $m$ and $m'$ are two proof-search methods 
and $T$ is 
a theory such that the completeness of $m$ can be proved in $T$ and $m'$ 
fails in finite time attempting to prove a contradiction in $T$, then $m$ 
and $m'$ are different methods: they are separated by the theory $T$. 

\subsection{Separation of proof-search methods}

Consider a decidable set of axioms $T$, that is an $\omega$-consistent 
extension 
of arithmetic. We can express in the language of $T$, a proposition 
$Bew$ with two free variables, such that if $U$ is a decidable set of axioms 
(i.e. the index of a total computable function characterizing these axioms) and $A$ is
a proposition (i.e. the index of a proposition) then the sequent
$T \vdash Bew(U, A)$ is provable in predicate logic if and only if 
the sequent $U \vdash A$ is.

Consider a proof-search method $m$. We can build, in the language of
$T$, a proposition $M$ with two free variables, such that if $U$ is a
decidable set of axioms and $A$ is a proposition 
then 
\begin{itemize}
\item if the method 
$m$ applied to the theory
$U$ and to the proposition $A$ succeeds then 
the sequent $T \vdash M(U, A)$ is
provable, 
\item if the method 
$m$ applied to the theory
$U$ and to the proposition $A$ does not succeed (i.e. fails in finite
time or loops) then 
the sequent $T \vdash M(U, A)$ is not provable, 
\item if the method $m$ applied to the theory
$U$ and to the proposition $A$ fails in finite time
then the sequent $T \vdash \neg M(U, A)$ is provable.
\end{itemize}

Assume, moreover that the completeness of the method $m$ is provable in $T$, 
i.e. that the sequent 
$$T \vdash \fa U \fa A~(Bew(U,A) \Rightarrow M(U,A))$$ 
is provable in predicate logic.
Then, by G\"odel second incompleteness theorem, the sequent
$$T \vdash \neg Bew(T,\bot)$$ 
is not provable in predicate logic,
hence the sequent 
$$T \vdash \neg M(T,\bot)$$ 
is not provable either.
Thus, if the completeness of $m$ can be proved in the theory $T$, then the 
method $m$ attempting to prove $\bot$ in the theory $T$ cannot fail in 
finite time: 
it must loop. 

If $m'$ is a proof-search method that fails in finite time
attempting to prove $\bot$ in the theory $T$, then $m$ and $m'$ are different.

\subsection{Translations}

This can be generalized in the following way. Assume that $\phi$ is a 
translation from theories to theories whose completeness can be proved in 
$T$, i.e. such that the sequent
$$T \vdash \fa U~(Bew(U,\bot) \Rightarrow Bew(\phi(U),\bot))$$ 
is provable. 
Assume, moreover that the completeness of the method $m$, is provable in $T$, 
i.e. that the sequent 
$$T \vdash \fa U \fa A~(Bew(U,A) \Rightarrow M(U,A))$$ 
is provable in predicate logic.
Then, by G\"odel second incompleteness theorem, the sequent
$$T \vdash \neg Bew(T,\bot)$$ is not provable in predicate logic, 
hence the sequent 
$$T \vdash \neg M(\phi(T),\bot)$$ 
is not provable either.
Thus, if the completeness of the translation $\phi$ and that of 
the method $m$ can be proved in $T$, 
then the method $m$ 
attempting to prove $\bot$ in the theory $\phi(T)$ cannot fail it finite time: 
it must loop.

If $m'$ is a proof-search method that fails in finite time
attempting to prove $\bot$ from $T$, then the application of 
$m'$ to a theory $U$ is not the application of $m$ to $\phi(U)$. 

\subsection{Application to Ordered resolution with selection}

Let ${\cal H}$ be the first-order presentation of Simple type theory
of Example \ref{bbb}. 
The completeness of Ordered resolution with selection can be proved in 
${\cal H}$ provided the stability and the totality of the order
are provable in ${\cal H}$. 

Attempting to prove $\bot$ in the theory ${\cal H}$ in Polarized resolution
modulo fails in finite time. 
Thus, Polarized resolution modulo is not an instance of Ordered 
resolution with selection for any selection function and order
relation whose stability and totality can be proved in ${\cal
H}$. Neither it is the application of Ordered resolution with selection
to a translation of its input, for a translation provably complete 
in ${\cal H}$.

\section{Conclusion}

Polarized resolution modulo is a combination of a restriction on the
choice of clauses and a restriction on the choice of literals in
resolution.  Combining these two restrictions makes the method so
restrictive that it is not always complete, even when the one-way
clauses form a consistent subset and the selection of literals is
based on an ordering on atoms.  But the completeness condition is
stronger, as completeness is equivalent to the cut elimination
property for the associated sequent calculus in Polarized deduction
modulo. Thus, unlike Ordered resolution with selection, the
completeness of all instances of the method cannot be proved in the
same theory.

The advantage of such methods whose completeness of all instances
cannot be proved in the same theory is that their logical strength is
not limited by the logical strength of the theory in which 
the completeness of the method can be proven.

\end{document}